\documentclass[11pt]{article}
\usepackage{amsmath,amsthm,amssymb}
\date{}
\newtheorem{theorem}{Theorem}
\newtheorem{lemma}{Lemma}

\newtheorem{proposition}{Proposition}

\newtheorem{open problem}{Open problem}

\title{The Maximum Nonlinearity of 8-variable Balanced Boolean Functions is 116}
\author{ Yueying~Lou\footnote{School of Computer and Electronic Information/School of Artificial Intelligence,Nanjing Normal University, Nanjing, China. Email: 232212001@njnu.edu.cn.
} ,\ Youle~Xu\footnote{China Telecom Corporation Limited, Huaibei Branch, China. Email: youlexu@163.com.} \ and Qichun~Wang \footnote{School of Computer and Electronic Information/School of Artificial Intelligence,
Nanjing Normal University, Nanjing, China.
 Email: qcwang@fudan.edu.cn.} \footnote{Corresponding author.}}
\begin{document}
\maketitle
\begin{abstract}
Does there exist an 8-variable balanced Boolean function with nonlinearity 118? This is a famous and long-standing open problem in Boolean functions. In this paper, we give a negative answer to this open problem, and prove that the maximum nonlinearity of balanced Boolean functions in 8 variables
is 116, establishing the 8-variable instance of Dobbertin's conjectured recursion for the minimum Walsh peak of balanced functions. Equivalently, the covering radius of the Reed--Muller code RM(1,8) in 0-resilient Boolean functions is 116.
We deduce the result by obtaining a quadratic form of polar rank 6, and reducing the problem to the existence of a 6-variable Boolean function with nonlinearity greater than 26 whose Hamming distance to a bent function is between 4 and 6.

\end{abstract}

\par {\bf Keywords:}  Boolean functions, nonlinearity, Walsh transform, quadratic forms, polar rank, bent functions, Reed--Muller codes.


\section{Introduction}

Boolean functions are fundamental objects in cryptography and coding theory \cite{Car,Logachev}. An $n$-variable Boolean function is a map from the $n$-dimensional vector space $\mathbb{F}_{2}^{n}$ to the binary field $\mathbb{F}_{2}$.
To be used in the design of symmetric ciphers, Boolean functions should be balanced and with high nonlinearity \cite{Cusick}. A $2n$-variable bent function has the maximum nonlinearity $2^{2n-1}-2^{n-1}$. However, it is not balanced.
The upper bound for the nonlinearity of $2n$-variable balanced Boolean functions is $2^{2n-1}-2^{n-1}-2$, which is tight for $n\le 3$. It is an open problem whether this bound is always tight. In particular, does there exist an 8-variable balanced Boolean function with nonlinearity 118? This is a famous and long-standing open problem in Boolean functions (see e.g., \cite{Car,Car0,Djurasevic,Maitra,Picek}). 

Let $NLB_n$ be the maximum nonlinearity of $n$-variable balanced Boolean functions.  It has been conjectured by Dobbertin \cite{Dobbertin} that
\[
NLB_{2n}=2^{2n-1}-2^n+NLB_n,
 \]
 which is still unsettled for $n\ge 4$ \cite{Oz}. Particularly, if the conjecture  is true, then the minimum Walsh peak of 8-variable balanced Boolean functions should be 24, and $NLB_8=116$.

 The $r$-th order Reed--Muller code of length $2^m$, denoted by RM$(r,m)$, is the binary linear code formed
by truth tables of all $m$-variable Boolean functions of algebraic degree $\leq r$.  The covering radius of RM$(1,m)$ is the maximum nonlinearity of $m$-variable Boolean functions, which is known for $m$ even or $m<9$. There are 9-variable unbalanced Boolean functions with nonlinearity 242 in the generalized rotation symmetric class \cite{Kavut,Kavut2}, thus the covering radius of RM(1,9) is at least 242. For odd $m>9$,  the covering radius of RM$(1,m)$ is completely open.

 To resist fast correlation attacks and linear attacks, Boolean functions used in stream ciphers should be $t$-resilient and with high nonlinearity. So, covering radius of Reed--Muller codes for $t$-resilient functions was defined \cite{Borissov,Iwata}, and the covering radius of RM(1,8) in 0-resilient Boolean functions is the same as $NLB_8$.

In this paper, we give a negative answer to that famous open problem, and  prove that $NLB_8=116$, establishing the 8-variable instance of Dobbertin's conjectured recursion for the minimum Walsh peak of balanced functions. Equivalently, the covering radius of the Reed--Muller code RM(1,8) in 0-resilient Boolean functions is 116. It should be noted that a very recent preprint work \cite{Yuan} claims a computer-assisted proof of the same result, using 7-variable restrictions and Farkas certificates. Our proof is obtained independently and it is computer-free.

 The rest of the paper is organized as follows. In Section~2, the necessary background is established. In Section~3, we deduce the result by obtaining a quadratic form of polar rank 6, and reducing the problem to the existence of a 6-variable Boolean function with nonlinearity greater than 26 whose Hamming distance to a bent function is between 4 and 6. We end in Section~4 with conclusion and open problems.

\section{Preliminaries}
Let $\mathbb{F}_{2}$ denote the binary finite field, and let $\mathbb{F}_{2}^{n}$ be the $n$-dimensional vector space over $\mathbb{F}_{2}$. A {\em Boolean function} of $n$ variables is a map from $\mathbb{F}_{2}^{n}$ to $\mathbb{F}_{2}$. We denote by $B_{n}$ the set
of all $n$-variable Boolean functions.

Every Boolean function has a unique {\em algebraic normal form} (ANF),
which is its multivariate polynomial representation over $\mathbb{F}_{2}$:
\[
f(x_1,\ldots ,x_n)=\oplus_{S\subseteq \{1,2,\ldots ,n\}}a_S\prod_{i\in S}x_i.
\]
 The {\em algebraic degree} of $f$, denoted by $\deg(f)$, equals $\max\{|S| :  a_S\neq 0 \}$. The function $f$ is {\em affine}, if $\deg(f)\le 1$. We use $A_n$ to denote the set of all $n$-variable affine functions. That is, $A_n=\{g\in B_n: \deg(g)\le 1\}$.

 Let $f\in B_n$ and
\[
supp(f)=\{x\in \mathbb{F}_{2}^{n}: f(x)=1\}.
  \]
 The cardinality $|supp(f)|$ is called the {\em Hamming weight} of $f$, and will be denoted by $wt(f)$. The function $f$ is  {\em balanced} if $wt(f)=2^{n-1}$.

Let $f, g\in B_n$. The {\em Hamming distance} between $f$ and $g$ is $d(f,g)=wt(f\oplus g)$. The {\em nonlinearity} of $f$, denoted by $nl(f)$, is the minimum distance between $f$ and the set of affine functions. That is, $nl(f)=\min_{g\in A_n} d(f,g)$.

The Reed--Muller code RM$(r,m)$ is the set of truth-tables of all $m$-variable Boolean functions whose
algebraic degree is at most $r$. The {\em covering radius} of $RM(r,n)$ is defined as
\[
\max_{f\in B_n}d(f,RM(r,n))=\max_{f\in B_n}\min_{\deg(g)\le r}d(f,g).
\]
For more information on the covering radius, we refer to \cite{Gao,Wang}. It is known that RM$(r,m)$ has minimum Hamming distance $2^{m-r}$. That is, the weight of an $m$-variable Boolean function of degree $r$ is at least $ 2^{m-r}$ \cite{Kasami1,MC}

 The {\em Walsh transform} of a given function $f \in B_n$ is the integer-valued function over $\mathbb{F}_{2}^n$
defined by
\[
W_f(\omega)=\sum_{x\in \mathbb{F}_{2}^n}(-1)^{f(x)+\omega \cdot x},
\]
where $\omega \in \mathbb{F}_{2}^n$, and $\omega \cdot x=\omega_1x_1+\cdots+\omega_nx_n$ is the usual inner product.

 It is easy to see that a Boolean function $f$ is  balanced if and only if $W_f(0)=0$.  If $W_f(\omega)=0$
for any $\omega\in \mathbb{F}_{2}^{n}$ satisfying $0\leq wt(\omega)\leq t$, then $f$ is called $t$-{\em resilient} \cite{Sun,Xiao}.

The nonlinearity of $f$ can be determined by
\[
nl(f)=2^{n-1}-\frac{1}{2}\max_{\omega \in \mathbb{F}_{2^n}}|W_f(\omega)|.
\]
By the Parseval formulae, $\max_{\omega}|W_f(\omega)|\ge 2^{n/2}$. Therefore,
\[
nl(f)\le 2^{n-1}-2^{\frac{n}{2}-1}.
\]
 If the upper bound is achieved, then $f$ is a {\em bent} function. Clearly, bent function only exists for $n$ even, and its algebraic degree is at most $\frac{n}{2}$ \cite{Car1,Car2,Rothaus}.

For a real function $f$ on $F_2^n$, the Fourier-Hadamard transform of $f$ is the integer-valued function over $\mathbb{F}_{2}^n$
defined by
\[
H_nf(\omega)=\sum_{x\in \mathbb{F}_{2}^n}f(x)(-1)^{\omega \cdot x}.
\]
Writing $(f_1,f_2)=\sum_{x}f_1(x)f_2(x)$, character orthogonality gives
\[
H_n^2=2^nI, \ \sum_{\omega\in F_2^n}(H_nf)(\omega)^2=2^n\sum_{\omega\in F_2^n}f(\omega)^2,
\]
and $(H_nf_1,f_2)=(f_1,H_nf_2)$ \cite{Car}.

Let $f_1,f_2\in B_n$. We use $||$ to denote the concatenation, that is,
\[
(f_1||f_2)(x_1,\ldots,x_n,x_{n+1})=(x_{n+1}\oplus 1)f_1(x_1,\ldots,x_n)\oplus x_{n+1}f_2(x_1,\ldots,x_n).
\]

 \section{The maximum nonlinearity of 8-variable balanced Boolean functions is 116}
We use $NLB_8$ to denote the maximum nonlinearity of an 8-variable balanced Boolean function. Assume that $NLB_8=118$. Then there exists a balanced $f\in B_8$ with $|W_f(\omega)|\le 20$ for any $\omega\in F_2^8$. If $\omega\neq 0$, then we have
\begin{equation}
 W_f(\omega)=-256+4|x\in supp(f): \ \omega \cdot x=1|.
\end{equation}
Let $b=\oplus_{x\in supp(f)} x$. Then
\[
|x\in supp(f) : \ \omega \cdot x=1|=\sum_{x\in supp(f)} \omega \cdot x\equiv\omega \cdot b\ (mod \ 2).
\]
Therefore, $W_f(\omega)\equiv 4\omega\cdot b \ (mod \ 8)$. If $b=0$, then $W_f(\omega)$ is a multiple of 8 and $|W_f(\omega)|\le 16$. That is, $f$ is a bent function, which is contradictory to the fact that $f$ is balanced. Hence $b\neq 0$. Choose an invertible matrix $M$ with $Me_8=b$ and replace $f(x)$ by $f(Mx)$, where $e_8=(0,...,0,1)$. Then the support sum becomes $e_8$ and the set of the Walsh spectrum stays the same. In the following, we take this new function as $f$, and we have
\[
b=\oplus_{x\in supp(f)} x=e_8.
\]
\subsection{A congruence connecting the Walsh spectrum and the quadratic form}
In this subsection, we will deduce a congruence connecting the Walsh spectrum and the quadratic form.
\begin{lemma} Let $t\in F_2^n$. Then
\[
\oplus_{i=1}^n t_i\equiv \sum_{i=1}^n t_i+2\sum_{1\le i<j\le n} t_it_j \ (mod \ 4).
\]
\end{lemma}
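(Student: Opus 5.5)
The plan is to reduce the congruence to a counting identity. Interpret each $t_i\in\{0,1\}$ as an integer and let $k=\sum_{i=1}^n t_i$, the Hamming weight of $t$. Two observations do the work:
\[
\oplus_{i=1}^n t_i \equiv k \pmod 2,
\qquad
\sum_{1\le i<j\le n} t_it_j=\binom{k}{2},
\]
the second because $t_it_j=1$ exactly when both $t_i$ and $t_j$ equal $1$. The claim therefore becomes the purely numerical statement that $k \bmod 2$, read as an integer in $\{0,1\}$, is congruent to $k+k(k-1)$ modulo $4$.

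To prove this, simplify the right-hand side to $k+k(k-1)=k^2$ and split on the parity of $k$. If $k$ is even, then $k^2\equiv 0 \pmod 4$, which matches $\oplus t_i=0$. If $k$ is odd, then $k^2\equiv 1 \pmod 8$, so in particular $k^2\equiv 1 \pmod 4$, which matches $\oplus t_i=1$.

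As an alternative I might write out, one can induct on $n$. Adding a new bit $s$ to a block whose XOR is $a$ gives $a\oplus s=a+s-2as$ over the integers. The inductive hypothesis says $a\equiv \sum t_i+2\sum t_it_j \pmod 4$, so $2a\equiv 2\sum t_i \pmod 4$, and hence $-2as\equiv 2as\equiv 2s\sum t_i \pmod 4$. These are exactly the new cross terms, which closes the induction.

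No step is a real obstacle. The only point needing care is the convention that the left-hand side, an element of $\mathbb{F}_2$, is read as the integer $0$ or $1$ before reducing modulo $4$. I will state this convention explicitly in the write-up.
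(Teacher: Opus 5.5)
Your argument is correct, but it takes a different route from the paper. The paper writes $\oplus_{i=1}^n t_i=\frac{1-\prod_{i=1}^n(1-2t_i)}{2}$ and expands the product to get the exact integer identity $\oplus_{i=1}^n t_i=\sum_{\emptyset\neq I\subseteq\{1,\ldots,n\}}(-2)^{|I|-1}\prod_{i\in I}t_i$. It then drops all terms with $|I|\ge 3$ modulo $4$, using $-2\equiv 2$.

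You instead use symmetry. Because $t_i^2=t_i$, the right-hand side equals $(\sum_i t_i)^2=k^2$ exactly as an integer, and a square is $\equiv 0$ or $1 \pmod 4$ according to its parity. This is shorter and makes the congruence transparent, since the right-hand side is just the squared weight. Your inductive variant is also sound: $2a\equiv 2\sum_i t_i \pmod 4$ correctly eliminates the quadratic part of the hypothesis.

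What the paper's expansion buys is a formula valid for every power of $2$. Truncating at $|I|\le m$ gives a congruence modulo $2^m$ with terms of degree up to $m$. That is the natural tool if one wanted finer congruences on the Walsh spectrum than the mod-$16$ one used in Proposition~1. Your $k^2$ observation is tied to the modulus $4$, although its symmetric counterpart $\oplus_i t_i=\sum_{j\ge 1}(-2)^{j-1}\binom{k}{j}$ recovers the general statement. Your remark that $k^2\equiv 1\pmod 8$ for odd $k$ is true but not needed.
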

\proof Clearly,
\[
\oplus_{i=1}^n t_i=\frac{1-\prod_{i=1}^n(1-2t_i)}{2}.
\]
Since
\[
\prod_{i=1}^n(1-2t_i)=1+\sum_{\emptyset\neq I\subseteq\{1,...,n\}}(-2)^{|I|}\prod_{i\in I}t_i,
\]
we have
\[
\oplus_{i=1}^n t_i=\sum_{\emptyset\neq I\subseteq\{1,...,n\}}(-2)^{|I|-1}\prod_{i\in I}t_i,
\]
and the result follows.
\endproof

\begin{proposition} If $NLB_8=118$, then there exists a balanced $g\in B_8$ of nonlinearity 118 satisfying $\oplus_{x\in supp(g)} x=e_8$ and
\begin{equation}
W_g(a,c)\equiv 4c+8Q(a)\ (mod\ 16),
\end{equation}
for any $(a,c)\in F_2^7\times F_2$, where $Q(a)\in B_7$ is a quadratic function with $Q(0)=0$.
\end{proposition}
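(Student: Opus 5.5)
We take $g=f$, the normalized function constructed just before the statement: $f$ is balanced, has nonlinearity 118 (so $|W_f(\omega)|\le 20$ for all $\omega$), and its support sum is $e_8$. Write $\omega=(a,c)$ with $a\in F_2^7$, $c\in F_2$, and $x=(x',x_8)$. For $\omega\ne 0$, equation (1) gives
\[
W_f(\omega)=-256+4N(\omega),\qquad N(\omega)=|\{x\in supp(f):\omega\cdot x=1\}|.
\]
Hence it suffices to determine $N(\omega)$ modulo $4$; the factor $4$ then lifts this to $W_f(\omega)$ modulo $16$.

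The plan is to compute $N(\omega)\bmod 4$ with Lemma 1. Apply the lemma to the bits $t_x=\omega\cdot x$, $x\in supp(f)$. Then
\[
\sum_x t_x\equiv \bigoplus_x t_x - 2\sum_{\{x,y\}} t_xt_y \pmod 4 .
\]
The first term is $\omega\cdot b=\omega\cdot e_8=c$. For the second term we only need $\sum_{\{x,y\}} t_xt_y \bmod 2$, which is the degree-2 elementary symmetric function $e_2(t_x)$ over $F_2$. Over $F_2$ the identity $e_2(t)=\binom{\sum t}{2}\bmod 2$ holds, so this equals bit 1 of $N(\omega)$. The lemma is therefore self-consistent but not yet informative; the real work is to show that this bit is a quadratic function of $\omega$.

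To see this, expand $t_xt_y=(\omega\cdot x)(\omega\cdot y)=\sum_{i,j}\omega_i\omega_j x_iy_j$. Summing over unordered pairs $\{x,y\}$ in $supp(f)$ gives, modulo 2,
\[
q(\omega)=\sum_{i<j}\omega_i\omega_j\,\beta_{ij}+\sum_i \omega_i\,\alpha_i .
\]
Here $\beta_{ij}=\sum_{\{x,y\}}(x_iy_j+x_jy_i)$ and $\alpha_i=\sum_{\{x,y\}}x_iy_i$, both taken mod 2, and we used $\omega_i^2=\omega_i$. Thus $q$ is a quadratic Boolean function of $\omega$ with $q(0)=0$, and
\[
N(\omega)\equiv c+2q(\omega)\pmod 4,\qquad W_f(a,c)\equiv -256+4c+8q(a,c)\equiv 4c+8q(a,c)\pmod{16}.
\]
For $\omega=0$ the formula gives $0$, which agrees with balancedness since $W_f(0)=0$.

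The remaining step, and the main obstacle, is to remove the dependence of $q$ on $c$, i.e.\ to show $q(a,c)=Q(a)$. Write $q(a,c)=Q(a)+c\,(\lambda\cdot a+\mu)$ with $\lambda\in F_2^7$ and $\mu\in F_2$. There are two ways to proceed.

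First, try to use the freedom in the normalization. Replacing $f(x)$ by $f(Mx)$ with $M$ fixing $e_8$, for instance $x_8\mapsto x_8+\lambda'\cdot x'$, changes the support sum only by multiples of $e_8$-compatible terms. One checks whether such a substitution, possibly combined with adding the linear function $x_8$ or complementation, kills the cross term $c(\lambda\cdot a)$.

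If that fails, use the spectral bound directly. For $c=1$ we have $W_f\equiv 4+8(\cdot)\pmod{16}$, so $W_f(a,1)\in\{\pm4,\pm12,\pm20\}$. For $c=0$ we have $W_f(a,0)\in\{0,\pm16\}$ or $\{\pm8\}$ according to $Q(a)$. Then apply Parseval on each half (the cosets $c=0$ and $c=1$) together with the sums $\sum_a W_f(a,c)(-1)^{a\cdot u}$, which encode the subfunctions $f(\cdot,0)$ and $f(\cdot,1)$. Any linear cross term $c(\lambda\cdot a+\mu)$ can be absorbed into $Q$ by an affine change of the $a$-variables, namely $a\mapsto a$ together with a translation by a multiple of $e_8$ in $x$, i.e.\ by adding $\lambda\cdot x'$ type terms to $f$. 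This yields a function $g$ with the same spectrum profile, support sum $e_8$, and $W_g(a,c)\equiv 4c+8Q(a)$.
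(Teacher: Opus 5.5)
Your derivation of $W_f(a,c)\equiv 4c+8q(a,c)\pmod{16}$, with $q$ quadratic and $q(0)=0$, is correct. Applying Lemma 1 to the $128$ bits $\omega\cdot x$, $x\in supp(f)$, and then expanding $e_2$ bilinearly mod $2$ is an equivalent variant of the paper's route (Lemma 1 on the eight bits $\omega_ix_i$ for each $x$, then summing).

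The genuine gap is the step you flag as the main obstacle: removing the cross term $c(\lambda\cdot a+\mu)$. Neither of your two suggestions accomplishes it.
\begin{itemize}
\item A linear substitution $f(Mx)$ with $Me_8=e_8$ acts on the spectrum by $W_g(a,c)=W_f(Ta+cs',c)$ with $T$ invertible. This changes the linear part of the cross term by $B_Q(Ta,s')$. So it can cancel $\lambda\cdot a$ only if $\lambda$ annihilates the radical of $B_Q$, and nothing guarantees that (e.g.\ when $B_Q=0$).
\item Adding $x_8$ destroys balancedness, since $W_f(e_8)\equiv 4\pmod 8$.
\item Adding $\lambda\cdot x'$ to $f$ only shifts the spectrum, $W_g(a,c)=W_f(a+\lambda,c)$. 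That leaves the cross term's linear part $\lambda\cdot a$ intact and generally breaks $W_g(0)=0$.
\item Complementation, or translation by $e_8$ alone, flips only $\mu$.
\item The Parseval considerations are irrelevant here. The statement is a pure normalization and needs no spectral bound at all.
\end{itemize}
Relatedly, your opening choice $g=f$ cannot work in general.

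The missing idea is to translate the input by the full vector $d=(\lambda,\mu)$, i.e.\ set $g(x)=f(x\oplus d)$. This multiplies the spectrum by the character, $W_g(a,c)=(-1)^{a\cdot\lambda+c\mu}W_f(a,c)$; adding a linear function is the dual operation and does not do this. The function $g$ is balanced with the same nonlinearity. Its support sum is still $e_8$, because $|supp(f)|=128$ is even, so the $d$'s cancel.

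For $c=0$ the sign is harmless, since $W_f(a,0)\equiv 0\pmod 8$. For $c=1$ we have $W_f(a,1)\equiv 4\pmod 8$, so a sign flip shifts the value by $8$ modulo $16$:
\[
-\bigl(4+8(Q(a)+1)\bigr)\equiv 4+8Q(a)\pmod{16}.
\]
The flip happens exactly where $a\cdot\lambda+\mu=1$, which is exactly where the cross term is $1$. Hence $W_g(a,c)\equiv 4c+8Q(a)\pmod{16}$, which is the paper's argument.
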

\proof For the function $f$ defined as above, let $X_i=\sum_{x\in supp(f)}x_i$ and $X_{ij}=\sum_{x\in supp(f)}x_ix_j$. By Lemma 1, for any $\omega, x\in F_2^8$, we have
\[
\omega\cdot x=\oplus_{i=1}^{8} \omega_ix_i\equiv \sum_{i=1}^8 \omega_ix_i+2\sum_{1\le i<j\le 8} \omega_ix_i\omega_jx_j\ (mod \ 4).
\]
Summing over $supp(f)$ and using (1), we have
\[
 W_f(\omega)\equiv 4\sum_{i=1}^8 X_i\omega_i+8\sum_{1\le i<j\le 8} X_{ij}\omega_i\omega_j\ (mod \ 16).
\]
Since the support sum is $e_8$, there exist integers $D_i$ such that $X_i=e_{i8}+2D_i$, where $e_{i8}\in F_2$ and it equals one only when $i=8$. Define the quadratic function
\[
Q_0(\omega)=\sum_{i=1}^8(D_i \ mod\ 2)\omega_i+\sum_{1\le i<j\le 8} (X_{ij}\ mod\ 2)\omega_i\omega_j.
\]
Then $Q_0(0)=0$ and
\[
W_f(a,c)\equiv 4c+8Q_0(a,c)\ (mod \ 16).
\]
Since $8Q_0\equiv 8(Q_0 (mod \ 2))\ (mod \ 16)$, $Q_0$ can be viewed as a Boolean function. Write
\[
Q_0(a,c)=Q(a)\oplus c(a\cdot s\oplus t),
 \]
 where $s\in F_2^7$, $t\in F_2$ and $Q\in B_7$. Clearly, $s$ and $t$ stay the same for any $(a,c)$, and $Q(0)=0$. Let $d=(s,t)$ and $g(x)=f(x\oplus d)$. Then $g\in B_8$ preserves the set of the Walsh coefficients and the support sum $b$ (since $\sum_{x\in supp(g)}x=\sum_{x\in supp(f)}x+128d\equiv b\ (mod \ 2)$). Clearly,
\[
W_g(a,c)=(-1)^{(a,c)\cdot d}W_f(a,c)=(-1)^{a\cdot s+ct}W_f(a,c).
\]
If $c=0$, then
\[
W_g(a,c)=\pm W_f(a,c)\equiv 8Q(a)\ (mod\ 16).
\]
If $c=1$ and $a\cdot s+t=0$, then
\[
W_g(a,c)=W_f(a,c)\equiv 4+8Q(a)\ (mod\ 16).
\]
If $c=1$ and $a\cdot s+t=1$, then
\[
W_g(a,c)=-W_f(a,c)\equiv -4+8(Q(a)+1)\equiv 4+8Q(a)\ (mod\ 16),
\]
and the result follows.
\endproof

By Proposition 1, we have the following Table 1 on the distribution of $W_g(a,c)$, where $a\in F_2^7$ and $c\in F_2$.
\begin{table}[!h]
\caption{Distribution of $W_g(a,c)$}
 \begin{center}\begin{tabular}{|c|c|c|c|}
  \hline
$Q(a)$ & $W_g(a,0)$ &  $W_g(a,1)$ \\
  \hline
  0 & -16, 0, 16  &  -12, 4, 20  \\
  \hline
  1 & -8, 8 &  -20, -4, 12  \\
  \hline
 \end{tabular}
     \end{center}
     \end{table}
\endproof
In the following, we use $h_i$, $0\le i\le 5$ to denote the number of Walsh coefficients of absolute value $4i$. Table 1 shows that all zero coefficients occur in $W_g(a,0)$, all coefficients of magnitude 4 occur in $W_g(a,1)$, and exactly $wt(Q)$ positions have magnitude 8 in $W_g(a,0)$. In
particular, $h_2 = wt(Q)$ and $h_0\ge 1$.

Assume that $NLB_8=118$. Then we obtain a quadratic form $Q$ (here quadratic forms may include linear terms). We will investigate $Q$ further and deduce a contradiction.

\subsection{The polar rank of the quadratic form $Q$ must be 6}
For a quadratic Boolean function $f$ with $f(0)=0$, its polar form is $B_f(x,y)=f(x+y)+f(x)+f(y)$. Its radical is the set of $y$ with $B_f(x,y)=0$ for all $x$. The polar rank means the rank of $B_f$, and nondegenerate means that its radical is zero. It is known that the polar rank must be even \cite{Car}. In this section, we will prove that the polar rank of $Q$ must be 6.

\begin{lemma} [Proposition 55 of \cite{Car}] Let $f\in B_n$ be quadratic with $f(0)=0$ and polar rank $2r$. Then $|W_f(\omega)|$ is supported on an affine subspace of size $2^{2r}$, and all its nonzero values have magnitude $2^{n-r}$. Therefore,
\[
wt(f)\in \{2^{n-1},2^{n-1}-2^{n-r-1},2^{n-1}+2^{n-r-1}\}.
\]
If $r=0$, then $wt(f)\in \{0, 2^{n-1}\}$.
\end{lemma}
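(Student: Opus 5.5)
The plan is to reduce $f$ to a normal form by an affine change of variables. Since $f$ is quadratic with polar form $B_f$ of rank $2r$, I will use the classification of quadratic forms over $\mathbb{F}_2$ (Dickson's theorem). It gives an invertible matrix $M$ and a vector $u$ such that
\[
f(Mx)=x_1x_2\oplus x_3x_4\oplus\cdots\oplus x_{2r-1}x_{2r}\oplus \ell(x),
\]
where $\ell$ is affine and depends only on $x_{2r+1},\dots,x_n$ together with a constant. The derivation goes as follows. The radical $R$ of $B_f$ has dimension $n-2r$. On $R$ the map $y\mapsto f(y)$ is linear, because $B_f(y,y')=0$ there. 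Choosing a symplectic basis of a complement of $R$ puts the quadratic part in the above shape. The remaining linear terms on the first $2r$ coordinates are absorbed by translations $x_{2i-1}\mapsto x_{2i-1}\oplus\epsilon$, which turns the function into an affine image of $f$.

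Next I will use the fact that an invertible affine substitution $x\mapsto Mx\oplus u$ only permutes Walsh coefficients and changes their signs; it is a standard computation, $W_{f(Mx+u)}(\omega)=\pm W_f((M^{-1})^T\omega)$. So the multiset of $|W_f|$ values is preserved, and the support of $|W_f|$ is carried to an affine subspace of the same size.

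The key computation is the Walsh transform of the normal form, where the sum factorizes. Each block $x_{2i-1}x_{2i}$ is a 2-variable bent function, so it contributes a factor of absolute value $2$ at every pair $(\omega_{2i-1},\omega_{2i})$. The coordinates $x_{2r+1},\dots,x_n$ carry the affine function $\ell$, and they contribute $\pm 2^{n-2r}$ when $\omega$ restricted to them equals the linear part of $\ell$, and $0$ otherwise. Hence $|W|=2^r\cdot 2^{n-2r}=2^{n-r}$ on an affine subspace of size $2^{2r}$ (a coset determined by the last $n-2r$ coordinates of $\omega$), and $|W|=0$ elsewhere. Parseval gives the consistency check $2^{2r}\cdot 2^{2n-2r}=2^{2n}$.

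For the weight, I will use $wt(f)=2^{n-1}-\frac12 W_f(0)$. The value $W_f(0)$ is either $0$ or $\pm 2^{n-r}$, which gives the three listed values. When $r=0$, $f$ is affine with $f(0)=0$, hence either the zero function or a nonzero linear function, so $wt(f)\in\{0,2^{n-1}\}$. The only mildly delicate step is the reduction to normal form, in particular the handling of the radical and the absorption of linear terms; I expect this to be the main obstacle, though it is standard.
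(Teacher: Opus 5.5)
The paper gives no proof of this lemma; it imports it from Carlet as Proposition 55. Your argument is therefore a self-contained substitute rather than a reconstruction, and it is correct. The reduction via a symplectic basis of a complement of the radical $R$ (using that $f$ is additive on $R$) is sound. So is the rule $W_{f(Mx\oplus u)}(\omega)=\pm W_f((M^{-1})^T\omega)$. The factorized computation gives $|W|=2^r\cdot 2^{n-2r}$ on a coset of size $2^{2r}$, and the weight and $r=0$ deductions both check out.

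There is one slip. The term $\alpha x_{2i-1}$ is absorbed by translating $x_{2i}$, not $x_{2i-1}$, so in general both coordinates of each pair must be translated. Better still, you need not absorb these terms at all: $W_{q\oplus a\cdot x}(\omega)=W_q(\omega\oplus a)$ merely translates the spectrum, which is exactly why the support is an affine rather than a linear subspace.

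The usual alternative skips the normal form. Put $y=x\oplus e$ in $W_f(\omega)^2=\sum_{x,y}(-1)^{f(x)\oplus f(y)\oplus\omega\cdot(x\oplus y)}$ and use $f(x\oplus e)=f(x)\oplus f(e)\oplus B_f(x,e)$ to get
\[
W_f(\omega)^2=2^n\sum_{e\in R}(-1)^{f(e)\oplus\omega\cdot e}.
\]
Since $f$ is linear on $R$, this equals $2^n|R|=2^{2n-2r}$ on the coset $\{\omega:\ \omega\cdot e=f(e)\ \forall e\in R\}$ of $R^\perp$, which has size $2^{2r}$, and $0$ elsewhere. That route is shorter and removes the step you flagged as the main obstacle, but it loses the signs of $W_f$. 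Your normal-form route costs more setup and in exchange gives the explicit affine class of $f$ and the full signed spectrum.
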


Write $g=g_0||g_1$, where $g_0,g_1\in B_7$. Let
\[
u(y)=\frac{(-1)^{g_0(y)}+(-1)^{g_1(y)}}{2},\ v(y)=\frac{(-1)^{g_0(y)}-(-1)^{g_1(y)}}{2}.
\]
Then $u,v\in \{0,\pm1\}$, $u^2+v^2=1$, $uv=0$, and
\begin{equation}
W_g(a,0)=2H_7u(a), \ W_g(a,1)=2H_7v(a).
\end{equation}

\begin{lemma} We have
\begin{equation}
(W_g(a,1), (-1)^{Q(a)})=256+32(h_0+h_1)+24wt(Q).
\end{equation}
\end{lemma}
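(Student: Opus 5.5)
The plan is to evaluate the inner product term by term using Table 1, which turns it into a linear combination of the counts $h_i$, and then to remove the unwanted counts with the two row-count relations and Parseval's identity.

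\textbf{Step 1: the individual terms.} For each $a$ set $T(a)=W_g(a,1)(-1)^{Q(a)}$. By Proposition 1 and Table 1:
\begin{itemize}
\item if $Q(a)=0$, then $W_g(a,1)\in\{-12,4,20\}$;
\item if $Q(a)=1$, then $W_g(a,1)\in\{-20,-4,12\}$, so after the sign change $T(a)\in\{20,4,-12\}$.
\end{itemize}
In both cases $T(a)$ depends only on $|W_g(a,1)|$. Specifically, $T(a)=4$, $-12$, $20$ according as $|W_g(a,1)|=4$, $12$, $20$.

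\textbf{Step 2: where each magnitude occurs.} By Table 1, the magnitudes $4,12,20$ occur only in the row $c=1$, and the magnitudes $0,8,16$ occur only in the row $c=0$. So $h_1,h_3,h_5$ count coefficients in the $c=1$ row, and $h_0,h_2,h_4$ count coefficients in the $c=0$ row. Consequently
\[
(W_g(a,1),(-1)^{Q(a)})=\sum_a T(a)=4h_1-12h_3+20h_5 .
\]

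\textbf{Step 3: the constraints.} Each row contains $128$ coefficients, which gives
\[
h_0+h_2+h_4=128,\qquad h_1+h_3+h_5=128 .
\]
We also have $h_2=wt(Q)$, as noted after Table 1. Parseval's identity $\sum_{\omega}W_g(\omega)^2=2^{16}$ reads
\[
16h_1+64h_2+144h_3+256h_4+400h_5=65536 .
\]

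\textbf{Step 4: elimination.} Substitute $h_3=128-h_1-h_5$ and $h_4=128-h_0-h_2$ into Parseval's identity and solve for $h_5$:
\[
h_5=56+h_0+\tfrac12 h_1+\tfrac34 h_2 .
\]
Substituting the same expression for $h_3$ into the sum from Step 2 gives $\sum_a T(a)=16h_1+32h_5-1536$. Inserting the value of $h_5$ then yields
\[
\sum_a T(a)=256+32(h_0+h_1)+24h_2 ,
\]
and $h_2=wt(Q)$ gives the stated formula.

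No step is a real obstacle; this is bookkeeping. The one point that must be checked carefully is Step 2, that the residues mod $16$ in Proposition 1 separate the magnitudes cleanly between the two rows. This is exactly what lets us write the row sizes and the inner product in terms of the $h_i$ alone.
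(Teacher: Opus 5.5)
Your proof is correct and uses the same ingredients as the paper's: the Table 1 residues (which place magnitudes $0,8,16$ in the $c=0$ row and $4,12,20$ in the $c=1$ row), $h_2=wt(Q)$, the row counts, and Parseval. The paper avoids introducing $h_3,h_4,h_5$ by a more compact packaging: it notes $|W_g(a,1)-4(-1)^{Q(a)}|\in\{0,16\}$, with value $0$ exactly when $|W_g(a,1)|=4$, then expands $\sum_a (W_g(a,1)-4(-1)^{Q(a)})^2=256(128-h_1)$ and inserts $\sum_a W_g(a,1)^2=32768+256h_0+192wt(Q)$ to get the inner product directly.
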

\proof By Table 1, we have
\begin{eqnarray*}
\sum_{a\in F_2^7} W_g(a,0)^2&=&0\cdot h_0+64wt(Q)+256(128-h_0-wt(Q))\\
&=& 32768-256h_0-192wt(Q).
\end{eqnarray*}
Then by the Parseval formulae, we have
\begin{equation}
\sum_{a\in F_2^7} W_g(a,1)^2=32768+256h_0+192wt(Q).
\end{equation}
By Table 1, $|W_g(a,1)-4(-1)^{Q(a)}|\in \{0,16\}$, and it equals 0 if and only if $|W_g(a,1)|=4$. Therefore,
\[
\sum_{a\in F_2^7} (W_g(a,1)-4(-1)^{Q(a)})^2=256(128-h_1).
\]
Then by (5), we have
\[
256h_0+192wt(Q)-8(W_g(a,1), (-1)^Q(a))+2048=-256h_1,
\]
and the result follows.
\endproof

\begin{proposition} The polar rank of $Q$ is 6.
\end{proposition}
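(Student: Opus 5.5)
The plan is to compare the two sides of Lemma 3 once the left-hand side has been bounded in terms of the polar rank. Since $Q\in B_7$ is quadratic with $Q(0)=0$, its polar rank is $2r$ with $2r\le 7$, so $r\in\{0,1,2,3\}$. It therefore suffices to rule out $r=0,1,2$.

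\textbf{Step 1 (upper bound on the left side of Lemma 3).} By (3), $W_g(a,1)=2H_7v(a)$. Using $(H_7f_1,f_2)=(f_1,H_7f_2)$ we get
\[
(W_g(a,1),(-1)^{Q(a)})=2(H_7v,(-1)^{Q})=2\sum_{y\in F_2^7}v(y)\,W_Q(y),
\]
where $W_Q=H_7(-1)^Q$ is the Walsh transform of $Q$. By Lemma 2, $W_Q$ is nonzero on exactly $2^{2r}$ points, and there $|W_Q|=2^{7-r}$. Since $v(y)\in\{0,\pm1\}$, this gives
\[
(W_g(a,1),(-1)^{Q(a)})\le 2\cdot 2^{2r}\cdot 2^{7-r}=2^{8+r}.
\]

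\textbf{Step 2 (lower bound on the right side).} The right-hand side of Lemma 3 is $256+32(h_0+h_1)+24\,wt(Q)$. We know $h_0\ge 1$ and $h_1\ge 0$, and Lemma 2 restricts $wt(Q)$ to a short list depending on $r$. Each case then gives a contradiction:
\begin{itemize}
\item $r=0$: the bound is $2^8=256$, while the right side is at least $256+32h_0\ge 288$.
\item $r=1$: the bound is $512$. Here $wt(Q)\in\{32,64,96\}$, so the right side is at least $256+32+24\cdot 32=1056>512$.
\item $r=2$: the bound is $1024$. Here $wt(Q)\in\{48,64,80\}$, so the right side is at least $256+32+24\cdot 48=1440>1024$.
\end{itemize}
Hence $r=3$, i.e. the polar rank of $Q$ is $6$.

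\textbf{Main obstacle.} I do not expect a real obstacle. The points needing care are the symmetry of $H_7$ when moving it onto $(-1)^Q$, and the fact that Lemma 2 gives the exact support size and magnitude of $W_Q$, which is what makes Step 1 work. The crude bound $|v|\le 1$ already suffices in every case, so no information about the support of $v$ (the set where $g_0\neq g_1$) is needed.
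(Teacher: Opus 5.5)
Your proposal is correct and follows the paper's proof essentially step for step. You transfer $H_7$ onto $(-1)^Q$, use Lemma 2 to bound $2(v,H_7(-1)^Q)$ by $2^{8+r}$, and then rule out $r=0,1,2$ by comparing with Lemma 3, using $h_0\ge 1$ and the smallest admissible $wt(Q)$ in each case; the only cosmetic difference is that you keep the $32h_0$ term when $r=1,2$, where the paper drops it.
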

\proof Let the polar rank of $Q$ be $2r$. By (3), we have
\[
(W_g(a,1), (-1)^{Q(a)})=2(H_7v,(-1)^Q)=2(v,H_7(-1)^Q).
\]
Then by Lemma 2,
\[
(W_g(a,1), (-1)^{Q(a)})\le 2\sum_{y}|(H_7(-1)^Q)(y)|=2\cdot2^{2r}\cdot 2^{7-r}=2^{8+r}.
\]
Therefore, by Lemma 3, we have
\begin{equation}
256+32(h_0+h_1)+24wt(Q)\le 2^{8+r}.
\end{equation}
If $r=0$, then $256+32h_0\le 256$, which is contradictory to $h_0\ge 1$. If $r=1$, by Lemma 2, $wt(Q)\ge 32$. Then by (6), we have $256+24\cdot 32\le 512$, which is a contradiction. If $r=2$, then by Lemma 2 and (6), we have $wt(Q)\ge 48$ and $256+24\cdot 48\le 1024$, again a contradiction. Therefore, $r=3$, and the result follows.
\endproof

By Proposition 2, the polar rank of $Q$ must be 6. Then by Lemma 2, $wt(Q)\in\{56,64,72\}$. Moreover, by (6),
\begin{equation}
h_0+h_1\le 56-\frac{3}{4}wt(Q).
\end{equation}
Since
\[
32768-256h_0-192wt(Q)=\sum_{a\in F_2^7}W_g(a,0)^2=4\sum_{a\in F_2^7}(H_7u)(a)^2=4\cdot 128|supp(u)|,
\]
we have
\begin{equation}
|supp(u)|=64-\frac{h_0}{2}-\frac{3wt(Q)}{8}.
\end{equation}
Since $|supp(u)|$ is an integer and $h_0\ge 1$, we have $h_0\ge 2$ must be even.

\subsection{$wt(Q)$ must be 56}

We now  split the seven-dimensional space as $F_2^6\times F_2$. Since the polar rank of $Q$ is 6, its radical has dimension 1.  Choose its nonzero vector $r_0$ as the last vector of a basis. Then $F_2^7=V_0\oplus <r_0> $, where $V_0$  is  a six-dimensional complement of the radical. Clearly, for any $a\in F_2^6$, we have $B_Q((a,0),(0,1))=0$.

Let $p(a)=Q(a,0)$ and $\lambda=Q(0,1)$. Then
\begin{eqnarray*}
Q(a,j)&=&Q((a,0)+j(0,1))\\
&=& Q(a,0)+Q(j(0,1))+B_Q((a,0),j(0,1))\\
&=&p(a)+\lambda j.
\end{eqnarray*}
Here  $p$  is the restriction to $V_0$, and $p(0)=0$.  Its polar
form is nondegenerate: a vector in $V_0$ orthogonal to all of $V_0$ is also orthogonal to the radical, hence
to the whole space. It belongs to both $V_0$ and the radical, so is zero.

By Lemma 2, $p\in B_6$ is a bent function. Let $\epsilon=H_6((-1)^p)/8$.  Then $\epsilon\in\{\pm 1\}$ and $H_6 \epsilon=8(-1)^p $. Moreover,
\[
(H_7(-1)^Q)(y,t)=(H_6(-1)^p)(y)\sum_{j\in F_2}(-1)^{j(\lambda+t)}=\left \{ \begin{array} {l} 16\epsilon(y), \  \ t=\lambda \\
  0, \ \ \ \ \ \ \ \ \ t\neq \lambda . \\
\end{array} \right.
\]
Write $L=\{(y,\lambda):y\in F_2^6\}$ and $R=F_2^7\setminus L$. Both halves are identified with $F_2^6$ by the coordinate $y$.  In particular, $v_L(y)=v(y,\lambda)$ and $v_R(y)=v(y,1+\lambda)$; the restrictions $u_L$ and $u_R$ are defined in the
same way.

In the following, we will deduce that $v_L(y)=\pm 1$ and the associated 6-variable Boolean function is of nonlinearity greater than 26 whose Hamming distance to a bent function is between 4 and 6.

On $L$, let $z$  count the positions with $v_L=0$ and let $w$  count those with $ v_L=-\epsilon$. Then
\[
(v_L,\epsilon)=-w+(64-z-w)=64-z-2w.
\]
Furthermore,
\[
(W_g(a,1), (-1)^{Q(a)})=2(v,H_7(-1)^Q)=32(v_L,\epsilon)=32(64-z-2w).
\]
Then by (4), we have
\begin{equation}
z+2w=56-h_0-h_1-\frac{3}{4}wt(Q).
\end{equation}
On $R$, we have
\[
|supp(v_R)|=|supp(v)|-|supp(v_L)|=(128-|supp(u)|)-(64-z).
\]
Then by (8), we have
\begin{equation}
|supp(v_R)|=\frac{h_0}{2}+\frac{3wt(Q)}{8}+z.
\end{equation}

\begin{lemma} Let $z$ and $w$ be defined as above. Then
\begin{equation}
|supp(v_R)|\le 4+\frac{3}{2}z+5w.
\end{equation}
\end{lemma}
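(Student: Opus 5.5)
The plan is to read off the Walsh spectrum of $v$ on the two cosets $j=0,1$ of the last coordinate and compare $H_6v_R$ with the deviation of $H_6v_L$ from $H_6\epsilon$. Throughout, write the spectral variable as $(\alpha,j)\in F_2^6\times F_2$.

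\emph{Step 1: the allowed values of $H_7v$.} By (3), $W_g(a,1)=2H_7v(a)$. Proposition 1 gives $W_g(a,1)\equiv 4+8Q(a)\pmod{16}$, and $|W_g(a,1)|\le 20$. Hence
\[
s(a):=(-1)^{Q(a)}H_7v(a)\in\{2,-6,10\}\quad\text{for all } a.
\]

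\emph{Step 2: split over $L$ and $R$.} Since $v_L$ and $v_R$ are the restrictions of $v$ to the cosets $t=\lambda$ and $t=1+\lambda$, expanding the definition gives
\[
H_7v(\alpha,j)=(-1)^{j\lambda}\bigl(H_6v_L(\alpha)+(-1)^jH_6v_R(\alpha)\bigr).
\]
We also have $(-1)^{Q(\alpha,j)}=(-1)^{p(\alpha)+\lambda j}$. Writing $s_j=s(\alpha,j)$, this yields
\[
(-1)^{p(\alpha)}H_6v_L(\alpha)=\frac{s_0+s_1}{2},\qquad (-1)^{p(\alpha)}H_6v_R(\alpha)=\frac{s_0-s_1}{2}.
\]
Going through the pairs $\{s_0,s_1\}\subseteq\{2,-6,10\}$ gives the joint values $(A,B)$ of
\[
A=(-1)^pH_6v_L\quad\text{and}\quad B^2=(H_6v_R)^2 .
\]
They are $(10,0)$, $(6,16)$, $(2,0)$, $(2,64)$, $(-2,16)$ and $(-6,0)$.

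\emph{Step 3: a pointwise quadratic inequality.} Put $D=A-8$. The claim is that $B^2\le (D-2)^2$ in every case listed in Step 2. Indeed $D\in\{2,-2,-6,-10,-14\}$, so $(D-2)^2$ takes the values $0,16,64,144,256$, which dominate the corresponding $B^2\le 0,16,64,16,0$. This coefficient choice comes from matching the target linear combination $4+\tfrac32z+5w$. It is the main point to get right, and the equality cases $D=-2$ and $D=-6$ show it is tight.

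\emph{Step 4: sum over $\alpha$ using the data on $L$.} Since $H_6\epsilon=8(-1)^p$, we have $(-1)^pD=H_6(v_L-\epsilon)$. By Parseval,
\[
\sum_\alpha D(\alpha)^2=64\,\|v_L-\epsilon\|^2=64(z+4w),
\]
because $v_L-\epsilon$ has squared value $1$ at the $z$ zeros and $4$ at the $w$ positions where $v_L=-\epsilon$. By the adjointness $(H_6f_1,f_2)=(f_1,H_6f_2)$,
\[
\sum_\alpha A(\alpha)=(H_6v_L,(-1)^p)=8(v_L,\epsilon)=8(64-z-2w),
\]
and therefore $\sum_\alpha D=-8(z+2w)$.

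\emph{Step 5: conclude.} By Parseval on $R$, $64\,|supp(v_R)|=\sum_\alpha (H_6v_R)^2$. Combining with Steps 3 and 4,
\[
64\,|supp(v_R)|\le\sum_\alpha\bigl(D^2-4D+4\bigr)=64(z+4w)+32(z+2w)+256.
\]
Dividing by $64$ gives $|supp(v_R)|\le 4+\tfrac32 z+5w$, which is (11).
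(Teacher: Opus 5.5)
Your proof is correct and is essentially the paper's argument. The paper bounds $\sum_\alpha (H_6v_R)^2$ by $\sum_\alpha(10-|H_6v_L|)^2$ and then uses Parseval together with $\sum_\alpha|H_6v_L|\ge\sum_\alpha(-1)^{p}H_6v_L=8(64-z-2w)$; this yields exactly your $\sum_\alpha(D-2)^2=\sum_\alpha(10-A)^2$, computed with different bookkeeping. The mod-$16$ case table in your Steps 1--3 is more than you need, because $|H_6v_L|+|H_6v_R|\le 10$ (from $|W_g(a,1)|\le 20$ alone) already gives $B^2\le(10-|A|)^2\le(10-A)^2$.
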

\proof Clearly, for any $(a,j)\in F_2^7$, we have
\[
H_7v(a,j)=(-1)^{j\lambda}(H_6v_L(a)+(-1)^jH_6v_R(a)).
\]
Since $|2H_7v|=|W_g(a,1)|\le 20$, we have
\begin{equation}
|H_6v_L(a)|+|H_6v_R(a)|\le 10.
\end{equation}
Furhtermore,
\begin{equation}
\sum_a|H_6v_L(a)|\ge \sum_a(-1)^{p(a)}H_6v_L(a)=8(\epsilon,v_L)=8(64-z-2w).
\end{equation}
By the Parseval formulae, we have
\begin{equation}
\sum_a(H_6v_L(a))^2=64(64-z), \  \sum_a(H_6v_R(a))^2=64|supp(v_R)|.
\end{equation}
Therefore, by (12), (13) and (14), we have
\begin{eqnarray*}
64|supp(v_R)|&=&\sum_a(H_6v_R(a))^2\le\sum_a(10-|(H_6v_L(a)|)^2\\
&\le& 6400-160(64-z-2w)+64(64-z)\\
&=& 256+160(z+2w)-64z,
\end{eqnarray*}
and the result follows.
\endproof

\begin{proposition} We have $wt(Q)=56$.
\end{proposition}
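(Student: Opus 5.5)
Since $wt(Q)\in\{56,64,72\}$, I only need to exclude $wt(Q)=64$ and $wt(Q)=72$. The plan is to combine the two expressions for $|supp(v_R)|$: the exact formula (10) and the upper bound (11) of Lemma 4. Together with (9) and the constraint $h_0\ge 2$, this gives a linear system in the nonnegative integers $z,w,h_0,h_1$. I expect this system to be infeasible unless $wt(Q)=56$.

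Concretely, (9) gives $z+2w=56-h_0-h_1-\frac{3}{4}wt(Q)$, so $z+2w\le 56-h_0-\frac34 wt(Q)$. Substituting (10) into (11) gives
\[
\frac{h_0}{2}+\frac{3wt(Q)}{8}+z\le 4+\frac32 z+5w ,
\]
that is, $\frac{h_0}{2}+\frac{3wt(Q)}{8}\le 4+\frac{z}{2}+5w$. The right-hand side is at most $4+\frac52(z+2w)$, and hence at most $4+\frac52\bigl(56-h_0-\frac34 wt(Q)\bigr)$. This yields
\[
3h_0+\frac{9}{4}wt(Q)\le 144 .
\]
With $h_0\ge 2$ this forces $\frac94 wt(Q)\le 138$, i.e.\ $wt(Q)\le 61$. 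That rules out $64$ and $72$ and leaves only $wt(Q)=56$.

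There is a quicker check for $wt(Q)=72$: (9) gives $z+2w=2-h_0-h_1\le 0$, so $z=w=0$ and $h_0+h_1=2$. Then (11) says $|supp(v_R)|\le 4$, while (10) gives $|supp(v_R)|\ge 1+27$, which is a contradiction. For $wt(Q)=64$ the general inequality above suffices. I will also note that $z+2w\ge 0$ together with (9) already requires $h_0+h_1\le 56-\frac34wt(Q)$, which matches (7).

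The main point to verify is the direction of the crude estimate $\frac z2+5w\le\frac52(z+2w)$. This uses only $z\ge 0$, so it is valid. I should also double-check the arithmetic in Lemma 4's bound, but since that lemma is given, the proposition follows from this short linear-inequality argument.
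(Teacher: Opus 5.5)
Your argument is correct and is essentially the paper's proof. The paper also combines (9) (with $h_0\ge 2$, $h_1\ge 0$), the exact count (10), and the bound (11) of Lemma 4 written as $4+\frac{5}{2}(z+2w)-z$, treating $wt(Q)=72$ and $wt(Q)=64$ as separate cases, whereas you merge both cases into the single inequality $3h_0+\frac{9}{4}wt(Q)\le 144$ (and your separate check for $wt(Q)=72$ coincides with the paper's).
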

\proof By Lemma 2,  $wt(Q)\in \{56,64,72\}$. If $wt(Q)=72$, then by (7), $h_0+h_1\le 2$. Therefore, $h_0=2$ and $h_1=0$. Thus (9) and (10) give $z+2w=0$ and $|supp(v_R)|=28+z$. Therefore, $z=w=0$ and $|supp(v_R)|=28$, which is contradictory to (11). If $wt(Q)=64$, then by (9) and $h_0\ge 2$, we have
\[
z+2w=56-h_0-h_1-48\le 6.
\]
Then by Lemma 4, $|supp(v_R)|\le 4+\frac{5}{2}(z+2w)-z\le 19-z$. Furthermore, by (10),
\[
|supp(v_R)|=\frac{h_0}{2}+24+z\ge 25+z.
\]
 Therefore, $19-z\ge |supp(v_R)|\ge 25+z$, which is a contradiction, and the result follows.
\endproof
Since $Q(a,j)=p(a)+\lambda j$, if $\lambda=1$, then $Q(a,0)$ and $Q(a,1)$ contains exactly one value equal to one, giving $wt(Q)=64$. Therefore, $\lambda=0$ and $wt(p)=wt(Q)/2=28$. In this remaining case, by (9) and (10), we have
\begin{equation}
z+2w=14-h_0-h_1\le 12, \ |supp(v_R)|=21+\frac{h_0}{2}+z\ge 22+z.
\end{equation}

 \subsection{$|v_L|=1$}
Since $u^2+v^2=1$, $supp(u_L)=z$. Set $A(\omega)=W_g(\omega,0)$, where $\omega\in F_2^7$. By (3), $A=2H_7u$. Therefore, for $a\in F_2^6$, we have
\[
\frac{A(a,0)}{2}=H_6u_L(a)+H_6u_R(a),\ \frac{A(a,1)}{2}=H_6u_L(a)-H_6u_R(a).
\]
Their sum is $2H_6u_L(a)$. By (2) and $Q(a,j)=p(a)$,
\[
\frac{A(a,0)}{2}\equiv \frac{A(a,1)}{2}\equiv4p(a) \ (mod\ 8).
\]
Therefore, $2H_6u_L(a)\equiv8p(a)\equiv0\ (mod\ 8)$ and
\begin{equation}
H_6u_L(a)\equiv0\ (mod\ 4).
\end{equation}
\begin{lemma} If $z\in\{4,6\}$, then
\[
|\{a\in F_2^6: p(a)=0, H_6u_L(a) =\pm4\}|\ge 6.
\]
\end{lemma}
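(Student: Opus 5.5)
The plan is to pin down the value set of $H_6u_L$, count its nonzero values by Parseval, and then use a weighted Parseval identity against $(-1)^p$ to see how many of these nonzero values can sit where $p=1$.

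\emph{Step 1: values of $H_6u_L$.} Since $u_L$ takes values in $\{0,\pm1\}$ and is supported on exactly $z$ points, each $H_6u_L(a)$ is a sum of $z$ signs. Hence $|H_6u_L(a)|\le z\le 6$. Combined with the congruence (16), $H_6u_L(a)\equiv 0 \pmod 4$, this forces $H_6u_L(a)\in\{0,\pm4\}$ for every $a\in F_2^6$.

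\emph{Step 2: counting the nonzero values.} Parseval gives $\sum_a (H_6u_L(a))^2=64z$. Writing $N$ for the number of $a$ with $H_6u_L(a)=\pm4$, we get $16N=64z$, so $N=4z$. Split $N=N_0+N_1$ according to whether $p(a)=0$ or $p(a)=1$. The goal is $N_0\ge 6$.

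\emph{Step 3: the weighted sum, which is the key step.} Consider $S=\sum_a (-1)^{p(a)}(H_6u_L(a))^2=16(N_0-N_1)$. Expanding the square and using $(H_6f_1,f_2)=(f_1,H_6f_2)$ together with $H_6(-1)^p=8\epsilon$, I will rewrite it as
\[
S=\sum_{x,y}u_L(x)u_L(y)\,(H_6(-1)^p)(x+y)=8\sum_{x,y}u_L(x)u_L(y)\,\epsilon(x+y).
\]
The diagonal terms $x=y$ contribute $8z\,\epsilon(0)$. Because $wt(p)=28$ in the remaining case, $(H_6(-1)^p)(0)=64-56=8$, so $\epsilon(0)=1$ and the diagonal gives exactly $+8z$. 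Each of the $z^2-z$ off-diagonal terms with $x,y\in supp(u_L)$ is at least $-8$. Hence $S\ge 16z-8z^2$, that is,
\[
N_0-N_1\ge z-\tfrac{z^2}{2}.
\]

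\emph{Step 4: conclusion.} Adding this to $N_0+N_1=4z$ gives $2N_0\ge 5z-z^2/2$. For $z=4$ the bound reads $2N_0\ge 12$, and for $z=6$ it reads $2N_0\ge 12$. In both cases $N_0\ge 6$, which is the claim.

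The main obstacle is Step 3. The crude bound with all $z^2$ terms at least $-8$ only yields $N_0\ge 4$ when $z=4$. The argument really needs the sign information $\epsilon(0)=+1$, which comes from $wt(p)=28$ and hence ultimately from Proposition 3 and $\lambda=0$.
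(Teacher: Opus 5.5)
Your proof is correct and is essentially the paper's argument: you get the value set $\{0,\pm4\}$ from (16) and $|H_6u_L(a)|\le z\le 6$, use Parseval, and bound the twisted sum $\sum_a(-1)^{p(a)}(H_6u_L(a))^2\ge 16z-8z^2$ via $W_p(0)=8$, which gives $2N_0\ge 12$ in both cases. The only addition is that you state explicitly that $W_p(0)=8$ follows from $wt(p)=28$, a step the paper uses without comment.
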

\proof  Since $p$ is a bent function and $W_p(0)=8$, we have
\begin{eqnarray*}
&&\sum_{p(a)=0}(H_6u_L(a))^2-\sum_{p(a)=1}(H_6u_L(a))^2\\
&=& \sum_{a}(-1)^{p(a)}(H_6u_L(a))^2\\
&=& \sum_{x,y}u_L(x)u_L(y)\sum_a(-1)^{p(a)+a\cdot(x+y)}\\
&=& \sum_{x=y\in supp(u_L)}W_p(0)+\sum_{x\neq y\in supp(u_L)}u_L(x)u_L(y)W_p(x+y)\\
&\ge& 8z-8z(z-1))=16z-8z^2.
\end{eqnarray*}
Moreover, by the Parseval formulae, we have
\[
\sum_{p(a)=0}(H_6u_L(a))^2+\sum_{p(a)=1}(H_6u_L(a))^2=64z.
\]
Therefore,
\[
2\sum_{p(a)=0}(H_6u_L(a))^2\ge 16z-8z^2+64z=192.
\]
Since (16) and $|H_6u_L(a)|\le z\le 6$, we have $H_6u_L(a)\in \{0,\pm4\}$.
Hence,
\[
|\{a\in F_2^6: p(a)=0, H_6u_L(a) =\pm4\}|\ge 192\div2\div16=6.
\]
\endproof
\begin{proposition} $z$ must be zero.
\end{proposition}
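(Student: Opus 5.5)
The plan is to combine the numerical constraints (15) with Lemma 4 to bound $z$, rule out small odd and small even values by a congruence argument, and treat the remaining cases $z\in\{4,6\}$ with Lemma 5. The key point there is that Lemma 5 forces many zero Walsh coefficients, which feeds back into (15) and closes the argument.

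\textbf{Step 1: an upper bound on $z$.} Combining (15) with Lemma 4 gives
\[22+z\le |supp(v_R)|\le 4+\tfrac32 z+5w,\]
that is, $18\le \tfrac12 z+5w$. From $z+2w\le 12$ we get $5w\le 30-\tfrac52 z$. Hence $18\le 30-2z$, so $z\le 6$.

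\textbf{Step 2: parity and the case $z=2$.}
\begin{itemize}
\item Since $u_L$ takes values $\pm1$ on exactly $z$ points and $0$ elsewhere, $H_6u_L(a)\equiv z \pmod 2$. Together with (16), this shows $z$ is even.
\item If $z=2$, then $|H_6u_L(a)|\le 2$ and $H_6u_L(a)\equiv 0\pmod 4$. So $H_6u_L\equiv 0$, contradicting Parseval, which gives $\sum_a (H_6u_L(a))^2=64z=128$.
\end{itemize}
Thus only $z\in\{0,4,6\}$ remain, and it suffices to exclude $z\in\{4,6\}$.

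\textbf{Step 3: the cases $z\in\{4,6\}$.} By Lemma 5 there are at least $6$ points $a\in F_2^6$ with $p(a)=0$ and $H_6u_L(a)=\pm4$.
\begin{itemize}
\item At such an $a$ we have $Q(a,j)=p(a)=0$. By Table 1, $A(a,j)=W_g(a,j,0)\in\{0,\pm16\}$, so $A(a,j)/2\in\{0,\pm 8\}$ for $j=0,1$.
\item Their sum equals $2H_6u_L(a)=\pm 8$. Hence exactly one of $A(a,0)/2$ and $A(a,1)/2$ is $0$ and the other is $\pm8$.
\item Each such $a$ therefore contributes a zero Walsh coefficient at the distinct point $(a,j,0)$. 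So $h_0\ge 6$.
\end{itemize}
Substituting into (15) gives $z+2w\le 14-h_0\le 8$ and $|supp(v_R)|\ge 21+3+z=24+z$. Lemma 4 then yields
\[24+z\le 4+\tfrac32 z+5w\le 4+\tfrac32 z+\tfrac52(8-z)=24-z,\]
so $z\le 0$, a contradiction. Hence $z=0$.

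The main point of care is Step 3. One must check that the lattice of possible values in Table 1, for $Q=0$ and $c=0$, is exactly $\{0,\pm16\}$. One must also check that the six points produce six \emph{distinct} zeros of $W_g$; this holds because they have distinct first coordinates $a$. Everything else is arithmetic with (15) and Lemma 4.
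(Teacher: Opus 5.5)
Your proof is correct and follows essentially the same route as the paper: evenness of $z$ from (16), the bound $z\le 6$ from (15) and Lemma 4 (the paper's $z\ge 8$ case), and Lemma 5 forcing $h_0\ge 6$, which gives $24+z\le |supp(v_R)|\le 24-z$ for $z\in\{4,6\}$. The only minor difference is that you rule out $z=2$ via Parseval (since $H_6u_L\equiv 0$ would contradict $\sum_a (H_6u_L(a))^2=128$), whereas the paper exhibits an $a$ with $|H_6u_L(a)|=2$; both arguments are valid.
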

\proof Clearly,
\[
H_6u_L(a)=\sum_{x\in supp(u_L)}u_L(-1)^{a\cdot x}\equiv |supp(u_L)|=z\ (mod\ 2).
\]
Then by (16), $z$ is even. If $z=2$, let $supp(u_L)=\{x_1,x_2\}$. Choose $a$ such that $(-1)^{a\cdot (x_1+x_2)}=u_L(x_1)u_L(x_2)$. Then $u_L(x_1)(-1)^{a\cdot x_1}=u_L(x_2)(-1)^{a\cdot x_2}$, and $|H_6u_L(a)|=2$, which is contradictory to (16).
  If $z\ge 8$, then by (11) and (15), we have
\[
|supp(v_R)|\le 4+\frac{5}{2}(z+2w)-z\le4+\frac{5}{2}\times 12-z\le 26,
\]
and $|supp(v_R)|\ge 22+8=30$,
which is a contradiction. Let $S=\{a\in F_2^6: p(a)=0, H_6u_L(a) =\pm4\}$. If $z\in\{4,6\}$, by Lemma 5, $|S|\ge 6$. If $a\in S$, then $Q(a)=p(a)=0$. Hence, by Table 1,
\[
\{A(a,0),A(a,1): a\in F_2^6\}=\{W_g(\omega,0): \omega\in F_2^7\}\subseteq\{0,\pm16\}.
\]
Furthermore,
\[
\frac{A(a,0)}{2}+\frac{A(a,1)}{2}=2H_6u_L(a)=\pm8.
\]
 Therefore, exactly one of $A(a,0)$ and $A(a,1)$ is zero. Hence, $h_0\ge |S|\ge6$.
By (15), we have
\[
|supp(v_R)|=21+\frac{h_0}{2}+z\ge 24+z.
\]
However, by (11) and (15), we have
\[
|supp(v_R)|\le 4+\frac{5}{2}(z+2w)-z=4+\frac{5}{2}(14-h_0-h_1)-z\le 24-z.
\]
This is a contradiction, and the result follows.
\endproof
By Proposition 4, we have $|v_L|=1$.

 \subsection{Completion of the proof}

\begin{lemma} Let $f_1,f_2\in B_6$, where $f_1$ is a bent function. If $d(f_1,f_2)\in\{4,5,6\}$, then $nl(f_2)\le26$.
\end{lemma}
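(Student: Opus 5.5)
The plan is to compare the Walsh spectra of $f_1$ and $f_2$ directly and use moment (Parseval-type) identities. Let $D=supp(f_1\oplus f_2)$, so $k=|D|\in\{4,5,6\}$. Write $W_{f_1}(\omega)=8\epsilon(\omega)$ with $\epsilon(\omega)\in\{\pm1\}$, so that $H_6\epsilon=8(-1)^{f_1}$. For every $\omega$,
\[
W_{f_2}(\omega)=8\epsilon(\omega)-2S(\omega),\qquad S(\omega)=\sum_{x\in D}(-1)^{f_1(x)+\omega\cdot x}.
\]
Since $nl(f_2)\le 26$ is equivalent to $\max_\omega|W_{f_2}(\omega)|\ge 12$, it suffices to find one $\omega$ with $T(\omega):=\epsilon(\omega)S(\omega)\le -2$. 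I will assume $T\ge -1$ everywhere and derive a contradiction. Note that $T(\omega)\equiv k\pmod 2$ and $|T|\le k$.

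Next I compute the first two moments of $T$:
\[
\sum_\omega T(\omega)=\sum_{x\in D}(-1)^{f_1(x)}(H_6\epsilon)(x)=8k,\qquad \sum_\omega T(\omega)^2=\sum_\omega S(\omega)^2=64k,
\]
the latter by Parseval. For even $k\in\{4,6\}$ the assumption forces $0\le T\le k$, hence $\sum_\omega T(k-T)\ge 0$. But this sum equals $8k^2-64k<0$ for $k<8$, which is the desired contradiction.

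For $k=5$ the assumption gives $T\in\{-1,1,3,5\}$. Let $n_j$ be the number of $\omega$ with $T(\omega)=j$. The zeroth, first and second moments give three linear equations, which solve to
\[
n_3+3n_5=32,\qquad n_{-1}=12+n_3+2n_5,\qquad n_1=3n_5-12.
\]
Hence $n_5\ge 4$, and consequently $n_3\le 20$ and $n_{-1}\ge 32$. So there are at least four $\omega$ with $f_1(x)+\omega\cdot x$ constant on all of $D$ (with the sign matching $\epsilon$). To exclude this, I will use structure beyond the moments, and this is the step I expect to be the main obstacle.

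First, any five distinct points of $D$ have differences spanning a space $U$ of dimension at least $3$. The $\omega$ with $f_1+\omega\cdot x$ constant on $D$ therefore form at most two cosets of $U^\perp$, of size at most $8$ each, and such $\omega$ exist only if $f_1|_D$ is the restriction of an affine function. Second, I plan to bring in a third identity: either the third moment $\sum_\omega T^3$, which expands into a sum over triples in $D$ involving $\epsilon$ evaluated at sums of three points of $D$; or the constraint $\sum_\omega W_{f_2}(\omega)=64(-1)^{f_2(0)}$, after translating so that $0\in D$. The aim is to show this is incompatible with $n_5\ge 4$ and $n_{-1}\ge 32$.

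The tool for this is the dual bent function $\tilde f_1$ (with $\epsilon=(-1)^{\tilde f_1}$): $T(\omega)=5$ means $\tilde f_1$ is determined on an affine piece of $\omega$'s. Combining this with bentness of $\tilde f_1$ should bound $n_5$ more sharply. If the third moment proves unwieldy, I will fall back to a case split on $\dim U\in\{3,4\}$, and in each case use that a bent function in $6$ variables restricted to a $3$-flat cannot be "too affine", to bound $n_5$ directly.
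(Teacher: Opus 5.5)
Your argument for $k\in\{4,6\}$ is correct and complete, and it differs from the paper's. The paper notes that $\deg(f_1\oplus f_2)\ge 4$ (minimum distance of RM$(3,6)$), so $f_2$ is not bent and $nl(f_2)\le 27$; since $wt(f_2)$ is even, $nl(f_2)$ is even. Your identity $\sum_\omega T(k-T)=8k(k-8)<0$ is a self-contained replacement for this. Your reduction of the case $k=5$ to $n_5\ge 4$ is also right.

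\textbf{The gap.} The case $k=5$ stops at $n_5\ge 4$, and none of the tools you propose can finish it.
\begin{itemize}
\item \emph{Third moment.} Write $s_x(\omega)=\epsilon(\omega)(-1)^{f_1(x)+\omega\cdot x}$ for $x\in D$, so $T=\sum_{x\in D}s_x$. From your distribution, $\sum_\omega T^3=48n_5+808$. Expanding over $D$ gives $\sum_\omega T^3=40+480+6\Sigma_3\le 1000$, where $\Sigma_3$ is the sum of the ten triple correlations $\sum_\omega s_xs_ys_z=\pm 8$. Hence $n_5\le 4$, so $n_5=4$. But then $(n_5,n_3,n_1,n_{-1})=(4,20,0,40)$ satisfies every constraint up to order three (with all triple correlations equal to $+8$), so no contradiction appears.
\item \emph{Coset bound.} Your coset remark gives only $n_5\le 8$. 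In fact $\{T=5\}$ lies in a single coset of $U^\perp$, since constancy of $f_1(x)+\omega\cdot x$ on $D$ is one affine system $\omega\cdot(x+x_1)=f_1(x)+f_1(x_1)$. Either way, this does not exclude $n_5=4$.
\item \emph{The relation $\sum_\omega W_{f_2}(\omega)=64(-1)^{f_2(0)}$.} It holds for every Boolean function, so it cannot yield a contradiction.
\item \emph{Fallback on $3$-flats.} Its premise is false: $6$-variable bent functions can be affine on $3$-flats. For example, the self-dual $x\cdot y$ on $F_2^3\times F_2^3$ is linear on every flat $\{y=c\}$.
\end{itemize}

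\textbf{The missing ingredient.} The paper uses the full fifth-order correlation. Any product of an odd number of distinct $s_x$ equals $\pm\epsilon(\omega)(-1)^{\omega\cdot c}$ for some $c$. Its sum over $\omega$ is therefore $\pm(H_6\epsilon)(c)=\pm 8$; this is also why the triple correlations are $\pm 8$. Taking all five points, $\prod_{x\in D}s_x=(-1)^{(5-T)/2}$, so $n_5-n_3+n_1-n_{-1}=\pm 8$. With your linear relations this reads $8n_5-88=\pm 8$, i.e.\ $n_5\in\{10,12\}$.

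This contradicts $n_5\le 4$ from the third moment, which is the paper's route. There it is phrased via $N=(5-T)/2$ as $\sum_\omega\binom{N}{3}\ge 40$, which forces $n_5=4$ and then $\sum_\omega(-1)^{N}=-56\ne\pm 8$. It equally contradicts your single-coset bound $n_5\le 8$. With this identity added your proof closes; without it, the case $k=5$ remains unproved.
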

\proof Let $h=f_1\oplus f_2$. Since $0<wt(h)<8$, the minimum distance of $RM(3,6)$ gives $\deg(h)\ge4$. As $\deg(f_1)\le3$, we have $\deg(f_2)\ge4$, so $nl(f_2)<28$. If $wt(h)\in\{4,6\}$, then $nl(f_2)$ is even, and the result follows.

Suppose $wt(h)=5$, and write $supp(h)=\{x_1,\ldots,x_5\}$. Let $f_0$ be the dual bent function of $f_1$. Define
\[
\phi_i(\omega)=f_0(\omega)\oplus f_1(x_i)\oplus\omega\cdot x_i,
\quad s_i(\omega)=(-1)^{\phi_i(\omega)},
\]
and $N(\omega)=|\{i:\phi_i(\omega)=1\}|$. All sums below are over $\omega\in F_2^6$. By duality and character orthogonality,
\[
\sum_\omega s_i=8,\qquad
\sum_\omega s_is_j=0\quad(i\ne j).
\]
For distinct $i,j,k$, the bent property of $f_0$ also gives
\[
\sum_\omega s_is_js_k\in\{\pm8\},\qquad
\sum_\omega\prod_{i=1}^5s_i\in\{\pm8\},
\]
since each odd product is $(-1)^{f_0(\omega)}$ times an affine character. Expanding products of $(1-s_i)/2$, we obtain
\[
\sum_\omega N(\omega)=140,\qquad
\sum_\omega{N(\omega)\choose2}=120,
\]
and
\[
\sum_\omega{N(\omega)\choose3}
=\sum_{i<j<k}\frac{64-24-\sum_\omega s_is_js_k}{8}\ge40.
\]
If $N(\omega)\le3$ everywhere, put $n_r=|\{\omega:N(\omega)=r\}|$. Then $n_2+3n_3=120$ and $n_3\ge40$, forcing $n_3=40$, $n_2=0$. The first moment and $\sum_{r=0}^3n_r=64$ give $n_1=20$, $n_0=4$. Hence
\[
\sum_\omega\prod_{i=1}^5s_i
=\sum_\omega(-1)^{N(\omega)}=4-20-40=-56,
\]
a contradiction. Thus $N(\omega_0)\ge4$ for some $\omega_0$. Taking $l(x)=\omega_0\cdot x\oplus f_0(\omega_0)$, duality gives $d(f_1,l)=28$, and therefore
\[
nl(f_2)\le d(f_2,l)=28-N(\omega_0)+(5-N(\omega_0))=33-2N(\omega_0)\le25.
\]
This completes the proof.
\endproof

Since $\epsilon=H_6((-1)^p)/8$, we have  $|\epsilon|=1$ and $H_6 \epsilon=8(-1)^p $. Let $\epsilon=(-1)^{f_1}$, where $f_1\in B_6$. Then $|W_{f_1}(\omega)|=8$, and $f_1$ is a bent function. We now can complete the proof.
\begin{theorem} The maximum nonlinearity of 8-variable balanced Boolean functions is 116.
\end{theorem}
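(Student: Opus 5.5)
The plan is to derive a contradiction from the assumption $NLB_8=118$ by using the last unused constraint, the bound (12), on the function of Lemma 6; and then to close the gap between $116$ and $118$.

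First I will pin down the range of possible values. Every balanced $f\in B_8$ satisfies $W_f(0)=0$. By (1), every $W_f(\omega)$ is divisible by $4$, so $nl(f)=128-\frac12\max|W_f|$ is even. Together with the bound $nl\le 2^{7}-2^{3}-2=118$ stated in the introduction, this gives $NLB_8\le 118$. For the lower bound $NLB_8\ge 116$ I will quote the classical construction of Dobbertin: take a normal bent function in $B_8$ that is constant on an affine $4$-dimensional subspace, and replace it there by a balanced $4$-variable function of nonlinearity $4=NLB_4$. This yields a balanced function of nonlinearity $2^{7}-2^{4}+4=116$. So $NLB_8\in\{116,118\}$, and it suffices to rule out $118$.

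Now assume $NLB_8=118$ and work with $g$, $p$, $\epsilon$, $v_L$, $v_R$ as in Section 3. By Proposition 4 we have $z=0$, so $v_L$ takes only the values $\pm1$. Write $v_L=(-1)^{f_2}$ with $f_2\in B_6$, and recall $\epsilon=(-1)^{f_1}$ with $f_1$ bent. By definition, $w$ counts the points where $v_L=-\epsilon$, so $w=d(f_1,f_2)$. I will bound $w$ from both sides.
\begin{itemize}
\item Upper bound: (15) with $z=0$ gives $2w=14-h_0-h_1$. Since $h_0\ge 2$, this yields $w\le 6$.
\item Lower bound: Lemma 4 with $z=0$ gives $|supp(v_R)|\le 4+5w$, while (15) gives $|supp(v_R)|\ge 22$. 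Hence $5w\ge 18$, so $w\ge 4$.
\end{itemize}
Thus $d(f_1,f_2)\in\{4,5,6\}$, and Lemma 6 gives $nl(f_2)\le 26$.

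Finally, $nl(f_2)\le 26$ means $\max_a|H_6v_L(a)|=\max_a|W_{f_2}(a)|\ge 64-2\cdot 26=12$. This contradicts (12), which forces $|H_6v_L(a)|\le 10$ for every $a$. Hence $NLB_8\ne 118$, and so $NLB_8=116$.

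The main obstacle was really Lemma 6, which is already established, so this final step is short bookkeeping. The only points needing care are two. One is checking that $w$ is exactly $d(f_1,f_2)$, using the sign convention $\epsilon=(-1)^{f_1}$. The other is justifying the lower bound $116$, which I will take from the literature.
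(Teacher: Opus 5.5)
Your proposal is correct and is essentially the paper's proof: $z=0$ from Proposition 4, $4\le w\le 6$ from (15) and Lemma 4 with $w=d(f_1,f_2)$, and then Lemma 6 contradicts the bound $|W_{f_2}|\le 10$ that comes from (12). The only addition is that you spell out why ruling out $118$ is enough: the Walsh values are divisible by $4$, the upper bound is $118$, and Dobbertin's construction gives $116$. The paper leaves this step implicit.
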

\proof Assume $NLB_8=118$. Then according to the above discussion, $|v_L|=1$ and by (9), $v_L$ is obtained from $\epsilon$ by changing
\[
w=\frac{1}{2}(56-h_0-h_1-42)\le \frac{1}{2}(56-2-42)=6
\]
 signs. Furthermore, by (11) and (15), we have $22\le |supp(v_R)|\le 4+5w$. Hence, $w\ge 4$. That is,
\[
w\in\{4,5,6\}.
\]
Let $v_L=(-1)^{f_2}$, where $f_2\in B_6$. Then $d(f_1,f_2)\in\{4,5,6\}$. Moreover, by (12), we have $|W_{f_2}(a)|\le 10$, for any $a\in F_2^6$. Therefore, $nl(f_2)\ge 27$, which is contradictory to Lemma 6,
and the result follows.
\endproof

\section{Conclusion and open problems}
In this paper, we proved that the maximum nonlinearity of balanced Boolean functions in 8 variables is 116. A similar analysis for $\ge 10$ variables may deduce more interesting results. We leave this for future work. Finally, we list some related open problems that could be further considered.

\begin{open problem} Does there exist a 9-variable balanced Boolean function with nonlinearity 242?
\end{open problem}

\begin{open problem} Determine the covering radius of RM(1,9).
\end{open problem}

\begin{open problem} Does there exist a 10-variable balanced Boolean function with nonlinearity 494?
\end{open problem}

\begin{open problem} Let $f\in B_{2n}$ be balanced. Do we have $nl(f)\le 2^{2n-1}-2^{n-1}-4$ for any $n\ge 4$?
\end{open problem}

\section*{Acknowledgment}

The authors would like to thank the financial support from the National Natural Science Foundation of China (Grant 62172230). We also thank Dr. William Millan for pointing out an error in our original proof of Lemma 6.

\end{document}